\documentclass[11pt,a4]{llncs}
\usepackage[T1]{fontenc}

\usepackage[font=small]{caption}
\usepackage{latexsym}
\usepackage{graphicx}
\usepackage{tikz}
\usepackage{color}
\usepackage{url}
\usepackage{subfig}
\usepackage{amsmath,amssymb}
\usepackage{amsfonts}

\usepackage{mathtools}

\newcommand{\stirlingtwo}[2]{\genfrac{\lbrace}{\rbrace}{0pt}{}{#1}{#2}}

\pagestyle{empty}
\addtolength{\oddsidemargin}{-0.6cm}
\addtolength{\evensidemargin}{-0.6cm}
\addtolength{\textwidth}{1.7cm}

\addtolength{\topmargin}{-0.25cm}
\addtolength{\textheight}{0.5cm}

\begin{document}

\makeatletter
\renewcommand\paragraph{\@startsection{paragraph}{4}{\z@}%
                   {-8\p@ \@plus -2\p@ \@minus -2\p@}%
                   {-0.5em \@plus -0.22em \@minus -0.1em}%
                   {\normalfont\normalsize\itshape}}
\makeatother

\title{Counting small cuts in a graph} 
\author{
  Barbara Geissmann
  \and
  Rastislav \v{S}r\'amek}
\institute{Institute of Theoretical of Computer Science, ETH Zurich, Zurich, Switzerland\\
  e-mail:~geibarba@student.ethz.ch, rsramek@inf.ethz.ch}

\maketitle
\begin{abstract}
We study the minimum cut problem in the presence of uncertainty and
show how to apply a novel robust optimization approach, which aims 
to exploit the similarity in subsequent graph measurements or similar 
graph instances, without posing any assumptions on the way they have 
been obtained. With experiments we show that the approach works well 
when compared to other approaches that are also oblivious towards the
relationship between the input datasets. 
\end{abstract}

\section{Introduction}

Dealing with uncertainty is an ever more common problem. We are flooded with data 
recorded by virtually all modern devices from cars to cellular phones, data about 
various networks, observations of different phenomena. 
In order to be able to extract meaningful information 
from this data, we need to be able to remove or at least identify the noise that is inherently 
present, whether due to measurement errors or due to systematic influence
of unknown factors. 
In this paper we consider a novel method of robust optimization introduced by Buhmann et al.
\cite{buhmann2013}, and apply it to the problem of searching for the global minimum cut in a graph. 


Finding the global minimum cut in a graph is a well studied problem with applications 
ranging from information retrieval \cite{botafogo1993} to computer vision 
\cite{boykov2004}. The problem is to separate the set of graph vertices $V$ into two non-empty 
disjoint sets $X$ and $X\setminus V$, such that the sum of the weights of edges that have one end-point 
in $X$ and another in $X\setminus V$ is minimized. Since a cut is fully determined by the subset $X$, 
we will denote it only by $X$ with the possible caveat that $X$ and $V\setminus X$ denotes 
the same cut.
We are in particular interested in the minimum cut as a measure 
of network robustness \cite{ramanathan1987}: If the weight of an edge represents the effort needed to cut 
that particular edge, the minimum cut represents the least effort necessary to disconnect 
the graph.

Suppose that we are looking for a minimum cut in a graph, for instance one that represents connections 
between nodes in a sensor network. However, instead of the ``true'' graph we are only given 
two snapshots of it from two different points in time, with the same topology, but with different 
edge-weights. What should we do in order to identify a minimum cut in the ``true'' underlying graph, 
or a cut that will be minimum in a third, similar, snapshot? 

It is clear that without very precise understanding of the process by which we obtain the 
graph measurements, we are unable to answer this question with full confidence and thus any solution 
will be only an heuristic. Nevertheless, 
the setting is a realistic and very common one, and we should not give up. For instance, one intuitive course 
of action would be to average the weights provided by two instances edge by edge and compute a minimum cut 
on the resulting 
graph. In this paper we want to show that a different method, also oblivious to the properties of the 
data generator, might yield better results.

\subsection{Approximation set optimization}
We will introduce the aforementioned robust optimization method of Buhmann et al. \cite{buhmann2013} 
in greater detail. We will refer to it as approximation set optimization. Recall that the weight 
of a graph cut $X$ is the sum of the weights of the edges that have one endpoint in $X$ and another 
in $V\setminus X$, we will denote it by $w(X)$. 

\begin{definition}[$\rho$-Approximate Cut]\\
Let $\lambda(G)$ denote the weight of a global minimum cut in $G$. For a parameter $\rho \geq 1$, 
a $\rho$-approximate cut $X$ is a cut with weight at most $\rho\lambda(G)$, $w(X)\leq\rho\lambda(G)$.
\end{definition}

\begin{definition}[$\rho$-Approximation Set]\\
A $\rho$-approximation set of $G$, denoted by $A_\rho(G)$, is the set of all $\rho$-approximate cuts in $G$,
$A_\rho(G) = \{X\in V| w(X) \leq \rho\lambda(G)\}$. 
\end{definition}

Let $G_1$ and $G_2$ be two weighted graphs with the same topology but different edge-weights. The approximation 
set optimization method states that we should find a factor $\rho$, for which the intersection of 
the $\rho$-approximation sets $A_\rho(G_1)\cap A_\rho(G_2)$ is the largest, when compared to the expected 
size of this intersection if the instances were generated at random. We then pick a solution at random from 
the intersection of the resulting $\rho$-approximation sets. Formally, we look for $\rho^*$ such that 
\begin{equation}
\rho^*=\arg\max_\rho{|A_\rho(G_1)\cap A_\rho(G_2)| \over Es(|A_\rho(G_1)|,|A_\rho(G_2)|)},
\label{eq:similarity}
\end{equation}
where $Es(|A_\rho(G_1)|,|A_\rho(G_2)|)$ is the expected size of the intersection of the $\rho$-approximation sets 
of the given size. We call the value 
\begin{equation}|A_{\rho^*}(G_1)\cap A_{\rho^*}(G_2)|/Es(|A_{\rho^*}(G_1)|,|A_{\rho^*}(G_2)|)
\end{equation} \textit{unexpected similarity}. It
is a measure of similarity of $G_1$ and $G_2$, with respect to the optimization problem of looking for the 
minimum cut. 

In order to successfully apply the method, we need to be able to solve five problems: 
Count the number of $\rho$-approximate cuts in a graph $G$, 
count the number of cuts in the intersection of the approximate sets of two graphs $G_1$ and $G_2$, 
compute the function for the expected intersection $Es$, find the optimal factor $\rho^*$, and 
choose a cut at random from the set of all cuts that are $\rho$-approximate for the graphs $G_1$
and $G_2$ at the same time.

\subsection{Related work}

Robust optimization is a widely studied subject. However, in order to be able to derive provably optimal methods, 
one needs to restrict the scope of inputs, or make other strong assumptions about them. For instance 
\emph{Stochastic optimization} \cite{schneiderstochastic,Kall:Mayer} and \emph{Robust optimization} 
\cite{BEN:09} expect that we know respectively the complete distribution of an instance and the 
complete set of instances. Various methods of optimization for stable inputs on the other hand suppose 
that the input cannot change too much \cite{linial2010,bilo08,mihalak2011}. In our case, by not assuming 
anything about the input we lose the ability to apply any of these but
greatly increase the scope of problems for which we can hope to achieve good solutions.

The remainder of the paper is structured as follows. In Section \ref{sec:algorithm} we show how to count the sizes of $\rho$-approximation sets of cuts and their intersections, in Section \ref{sec:expectation} we will derive a approximate formula for 
the expected size of the approximation set intersection on random instances, followed by experimental evaluation of the method 
in Section \ref{sec:experiments} and concluding remarks in Section \ref{sec:conclusion}.

\section{Algorithms for counting small cuts \label{sec:algorithm}}

For many combinatorial optimization problems, the problem of counting approximate solutions is \#P-complete, even if the 
optimization problem itself is efficiently solvable. The reason for this lies in the possibly exponential number of 
solutions. For instance, there can be $n^{n-2}$ short spanning trees in a graph with $n$ vertices or $2^{n-2}$ 
short $s$-$t$ paths in a directed acyclic graph \cite{mihalak2013}. For minimum cuts, 
however, the possible number of near-optimal cuts is small. Dinits et al. \cite{dinits} showed that there can be at 
most ${n \choose 2}=O(n^2)$ minimum cuts in a graph and Karger \cite{kargerbounds} showed that the number of 
$\rho$-approximate cuts is at most $O(n^{2\rho})$. This makes our life significantly easier, since we can afford to 
enumerate, not only count the cuts in the approximation sets. Note that calculating the number of cuts shorter than 
an arbitrary threshold is still \#P-complete \cite{provan1983}. This is not surprising, since with rising threshold 
the problem must turn from easy to difficult, as calculating the maximum graph cut is a NP-complete problem. 

There are at least two different algorithms that can compute the $\rho$-approxi\-mation sets of a graph. 
One is by Nagamochi, Nishimura and Ibaraki \cite{nagamochi} and it solves the task deterministically in 
$O(mn^{2\rho})$ time if $m$ is the number of edges of the graph with $n$ vertices. The other is an adaption 
of the \textit{recursive contraction algorithm} 
by Karger and Stein \cite{kargerstein}, and it finds all $\rho$-approximate cuts in $O(n^{2\rho}\log^3n)$ 
time with high probability.

We will use the approach of Karger and Stein because it is the fastest currently known algorithm. Apart from 
that it allows us to make an adaptation with which we can directly compute the approximation sets.

\subsection{Karger and Stein's Algorithm}
The recursive contraction algorithm, described in Algorithm 1, finds a minimum cut in a graph as follows. 
The main idea is to repeatedly choose an edge at random and contract it, which means that the two end 
vertices of this edge are merged into a single vertex. The algorithm starts with two copies of the graph. 
On each of them it performs random edge contractions until the graph has shrunk down to a certain size.
Then the graph is copied again and the algorithm continues, again on both graphs. When only two vertices 
remain, the edges contracted into each of the two vertices correspond to one set of vertices in a cut and 
the weight of the remaining edge corresponds to the cost of this cut. The algorithm keeps track of the 
found cuts and the best cut is returned. The intuition behind the algorithm is that in the beginning, the probability 
of contracting an edge from a minimum cut, and thus excluding this cut from the set of possible results, is low. 
As the algorithm progresses, this chance increases, but this is combated by the increased number of concurrent 
evaluations. 

\begin{algorithm}[Recursive Contraction Algorithm]
\begin{tabbing}
\hspace{0cm} \= \hspace{.6cm} \= \hspace{.6cm} \= \kill
\>\textsc{RecursiveContract}($G$)\\
\>\textbf{if }$\vert V \vert \leq 6$ \textbf{then}\\
\>\>$G \leftarrow$ \textsc{Contract}($G$, $2$)\\
\>\>\textbf{return} the cut\\
\>\textbf{else}\\
\>\>\textbf{repeat twice}\\
\>\>\>$G' \leftarrow$ \textsc{Contract}($G$, $\lceil n/\sqrt{2}+1\rceil$)\\
\>\>\>\textsc{RecursiveContract}($G'$)\\
\>\>\textbf{return} the smaller cut\\
\>\textbf{end}
\end{tabbing}
\end{algorithm}

The routine \textsc{Contract}($G$, $x$) does repeated edge contraction in $G$ until only $x$ vertices remain. 
The whole algorithm runs in $O(n^2 \log n)$ time. The probability that it finds a particular minimum cut is at 
least $\Omega(1/\log n)$. If we repeat the algorithm $O(\log^2 n)$ times, we will find any particular minimum cut
with high probability. The algorithm can be adjusted so that it returns all minimum cuts that it finds instead of 
only one. Since the total number of unique minimum cuts in a graph is bounded from above by ${n \choose 2}$, 
we can find every minimum cut with high probability, within the total time complexity of $O(n^2 \log^3 n)$.

Karger and Stein's algorithm can be modified to find all $\rho$-approximate cuts \cite{kargerstein}, 
by changing the reduction factor from $\lceil n/\sqrt{2}+1\rceil$ to
$\lceil n/\sqrt[2\rho]{2}+1\rceil$ and stopping the contraction when $2\rho$ 
vertices remain. In this case, all remaining possible cuts are evaluated. 
The running time increases to $O(n^{2\rho} \log n)$, whereas the success 
probability remains the same. Since the number of $\rho$-approximate cuts is 
bounded by $\Theta(n^{2\rho})$, we can find
all $\rho$-approximate cuts with high probability by repeating the algorithm 
$O(\log^2 n)$ times. This gives the overall time of $O(n^{2\rho}\log^3n)$.

\subsection{Approximation Set Optimization Algorithm}
Recall that we want to determine $\rho^∗$ that maximizes the unexpected similarity of the two graphs $G_1$ and $G_2$ with respect to the minimum cut 
problem. To this end we first need to compute the $\rho$-approximation sets of $G_1$ and $G_2$ and their intersection. The former is done by \textsc{ApproximationSet}($G$), which is an adapted version of Karger and Stein's recursive contraction algorithm. Afterwards, we are ready to compute the expected  and unexpected similarity, $Es$ and u\_sim. By sampling for the best $\rho$ we
find the intersection of $A_{\rho^*}(G_1)$ and $A_{\rho^*}(G_2)$ from which we can pick a cut at 
random, as a solution that generalizes for both instances. The whole process is described in Algorithm \ref{algo:approx}.

\begin{algorithm}[Approximation Set Optimization Algorithm]\label{algo:approx}
\begin{tabbing}
\hspace{.6cm} \= \hspace{.6cm} \= \kill
\textbf{for } $\rho = 1$ \textbf{ until } $\rho = \text{MAX}$ \textbf{ do}\\
\>$A_\rho(G_1) \leftarrow \textsc{ApproximationSet}(G_1, \rho)$\\
\>$A_\rho(G_2) \leftarrow \textsc{ApproximationSet}(G_2, \rho)$\\
\>intersection $\leftarrow$ $intersect(A_\rho(G_1),A_\rho(G_2))$\\
\>u\_sim $\leftarrow$ $|$intersection$|$ $/$ $Es(|A_\rho(G_1)|,|A_\rho(G_2)|)$\\
\>\textbf{if } u\_sim $>$ max\_sim \textbf{then}\\
\>\>max\_sim $\leftarrow$ u\_sim\\
\>\>$\rho^*$\_intersection $\leftarrow$ intersection\\
\>\textbf{end}\\
\textbf{end}
\end{tabbing}
\end{algorithm}

Now let us discuss some issues of the algorithm in more detail and derive its time complexity.
Karger and Stein's version of the algorithm returns the cuts in an implicit way. Since we want to be able to compute the intersection of the approximation sets of two different graphs as well as to choose a cut from the intersection and apply it to a third graph, we need them explicitly. One simple possibility to meet this requirement is to store for each vertex whether it is in the cut or not. The entire cut can then be represented as a bit string of length $n$. Notice, that this notation is ambiguous, since the inverse of a bit string describes the same cut. We can fix this by allowing only cuts that have the first bit set to zero.

By treating the bit strings as numbers, we can sort the cuts in the approximation sets and then build the intersection in a merging fashion in $O(n^{2ρ} \log n)$ time, since the number of cuts in each approximation set is bounded from above by $\Theta(n^{2\rho})$.

Returning the cuts in an explicit manner also implies extra work during the computation of the approximation sets. After every recursion phase the union of the two found approximation sets is returned. To overcome difficulties like different smallest cut weights and duplicates, one possibility is to again sort the cuts. This extra work requires a factor of $O(\log n)$ additional time for the entire approximation set algorithm. So we end up with an approximation set algorithm that takes $O(n^{2\rho}\log^4n)$ time.

The time to compute u\_sim, the unexpected similarity, depends on the complexity of the function $Es$. We postpone this to Section 3.

The last thing we have to look at is the range and step size of the values for $\rho$ in the for loop. To choose a good bound for the
largest $\rho$ we want to test is not easy. It depends a lot on the structure of the graphs and the range of their weights. Therefore, we may want to start with rather big steps and refine them as we go.

\section{Expected intersection size \label{sec:expectation}}

Having described an algorithm that counts the size of individual approximation sets and 
their intersection, we turn to the question of deriving a formula for the expected size of the
intersection of the approximation sets. For a more detailed exposure we refer the reader to 
Chapter 3.2 of the bachelor thesis of Barbara Geissmann \cite{thesis}. 

For the expected similarity, we will only consider cuts on complete graphs. Otherwise we would need 
to track whether each cut $X$ cuts the graph into only two parts, since the Karger-Stein algorithm 
and its modifications return only such cuts. 

We first show that an arbitrary subset of cuts does not necessarily have to form a valid approximation set.
\begin{definition}[Crossing Cuts]
Two cuts $X$ and $Y$ cross each other if $X\cap Y \neq \emptyset$,
$X-Y\neq\emptyset$, $Y-X\neq\emptyset$, and $V-X-Y\neq\emptyset$. 
\end{definition}

\begin{definition}[Composed Cuts]\label{def:compcuts}
Let $X$ and $Y$ be two cuts that cross each other. Then they must define four further
cuts: 

\begin{equation}
\begin{array}{lll}
Z_1=X\cap Y & \qquad\qquad & Z_2=X-Y\\
Z_3=Y-X & & Z_4=V-X-Y
\end{array}
\end{equation}
We call $Z_1$, $Z_2$, $Z_3$, and $Z_4$ the \emph{composed cuts} of $X$ and $Y$. 
\end{definition}

\begin{theorem}
If two cuts $X$ and $Y$ in the approximation set $A_\rho(G)$ cross each other, 
then at least two of the four composed cuts of $X$ and $Y$ have to be in $A_\rho(G)$ as well.
\label{thm:cuts}
\end{theorem}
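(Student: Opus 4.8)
The plan is to analyze the six groups of edges running between the four composed cuts and to show that two well-chosen pairs of composed cuts each have total weight at most $w(X)+w(Y)$. The crossing condition guarantees that $Z_1,Z_2,Z_3,Z_4$ and their complements are all nonempty, so each $Z_i$ is a genuine cut, and $Z_i\in A_\rho(G)$ holds precisely when $w(Z_i)\le\rho\lambda(G)$. Combined with the pair bounds and the hypothesis $X,Y\in A_\rho(G)$, this will force at least two of the $Z_i$ into $A_\rho(G)$.

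Concretely, I would first write $e_{ij}$ for the total weight of edges with one endpoint in $Z_i$ and the other in $Z_j$, for $1\le i<j\le 4$. Using the identities $X=Z_1\cup Z_2$, $V\setminus X=Z_3\cup Z_4$, $Y=Z_1\cup Z_3$, and $V\setminus Y=Z_2\cup Z_4$, I would express all six relevant cut weights as sums of the $e_{ij}$; for instance $w(X)=e_{13}+e_{14}+e_{23}+e_{24}$ and $w(Z_1)=e_{12}+e_{13}+e_{14}$, with the remaining weights read off in the same bookkeeping.

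The key step is to observe that the two \emph{diagonal} pairs of composed cuts satisfy
\begin{align}
w(Z_1)+w(Z_4) &= w(X)+w(Y)-2e_{23}, \\
w(Z_2)+w(Z_3) &= w(X)+w(Y)-2e_{14}.
\end{align}
Since all edge weights are nonnegative, $e_{14}\ge 0$ and $e_{23}\ge 0$, so each diagonal pair sums to at most $w(X)+w(Y)\le 2\rho\lambda(G)$. Finally, if both cuts in one diagonal pair had weight strictly greater than $\rho\lambda(G)$, their sum would exceed $2\rho\lambda(G)$, a contradiction; hence each diagonal pair contains a composed cut of weight at most $\rho\lambda(G)$. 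The two resulting composed cuts, one from $\{Z_1,Z_4\}$ and one from $\{Z_2,Z_3\}$, are distinct and both lie in $A_\rho(G)$, which is the claim.

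I expect the only real obstacle to be identifying the correct pairing. The adjacent pairs, namely those sharing a side of $X$ or of $Y$, pick up a positive $2e_{ij}$ term (e.g.\ $w(Z_1)+w(Z_2)=w(X)+2e_{12}$) and yield no useful upper bound, whereas the diagonal pairing is exactly the one whose cancellation leaves $w(X)+w(Y)$ minus a nonnegative slack. Once the decomposition into the $e_{ij}$ is set up correctly, the rest of the argument is immediate.
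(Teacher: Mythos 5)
Your proof is correct, and it takes a genuinely different --- and in fact more careful --- route than the paper's. The paper reads four aggregate weights $a,b,c,d$ off its figure so that $w(X)=a+b$ and $w(Y)=c+d$, asserts that the composed cuts weigh exactly $a+c$, $a+d$, $b+c$, $b+d$, and, after normalizing $a\le b$, $c\le d$, $b\le d$, concludes $a+c\le t$ (both terms are at most $t/2$) and $b+c\le t$ (substitute $b$ for $d$ in $c+d\le t$), with $t=\rho\lambda(G)$; it thereby pins down two specific cheap composed cuts, namely those containing the lighter half of each original cut's boundary. That exact-weight claim, however, silently presupposes that your two diagonal edge classes vanish, $e_{14}=e_{23}=0$: summing the paper's four asserted weights gives $2\bigl(w(X)+w(Y)\bigr)$, while the true total is $w(Z_1)+w(Z_2)+w(Z_3)+w(Z_4)=2\bigl(w(X)+w(Y)\bigr)-2e_{14}-2e_{23}$, so the paper's accounting is exact only for the configuration its figure depicts (under a suitable reading of $a,b,c,d$ its two key inequalities can still be salvaged as upper bounds on $w(Z_1)$ and $w(Z_2)$, but the paper does not argue this). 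Your six-class decomposition avoids the issue entirely: the diagonal identities $w(Z_1)+w(Z_4)=w(X)+w(Y)-2e_{23}$ and $w(Z_2)+w(Z_3)=w(X)+w(Y)-2e_{14}$ are exact for arbitrary nonnegative edge weights, and the pigeonhole step even yields the slightly stronger conclusion that each of the pairs $\{Z_1,Z_4\}$ and $\{Z_2,Z_3\}$ contains a $\rho$-approximate cut; your observation that the two resulting cuts are distinct, and your check that the crossing condition makes each $Z_i$ a genuine nonempty cut, close the remaining gaps. What the paper's argument buys is brevity and an explicit identification of which composed cuts are cheap under its normalization; what yours buys is full generality on weighted graphs where edges between $Z_1$ and $Z_4$ or between $Z_2$ and $Z_3$ exist, with no WLOG step, at the cost of heavier bookkeeping --- and your closing remark about why the adjacent pairings fail (e.g.\ $w(Z_1)+w(Z_2)=w(X)+2e_{12}$) is likewise accurate.
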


\begin{proof}
According to the Figure \ref{fig:cuts} we denote by $a$, $b$, $c$, and $d$ the sums of the 
weights of the cut edges between the sets $Z_1$, $Z_2$, $Z_3$, and $Z_4$, as in Definition \ref{def:compcuts}.
Without loss of generality, let us suppose that $a\leq b$, 
$c\leq d$, $b\leq d$. Then, for cuts $X$ and $Y$ to be in the approximation 
set, there must be a threshold $t:=\rho\lambda(G)$ such that $a+b\leq t$ and $c+d\leq t$. The $4$ 
composed cuts will have weights $a+c$, $a+d$, $b+c$, and $b+d$. However, it must hold that 
$a+c\leq t$ because both $a$ and $c$ are at most $t/2$, and also $b+c\leq t$  because we can 
replace $d$ in $c+d\leq t$ with $b$ which is at most as large. 
\qed
\end{proof}

\begin{figure}[tb]
\begin{center}
   \includegraphics[width=.48\textwidth]{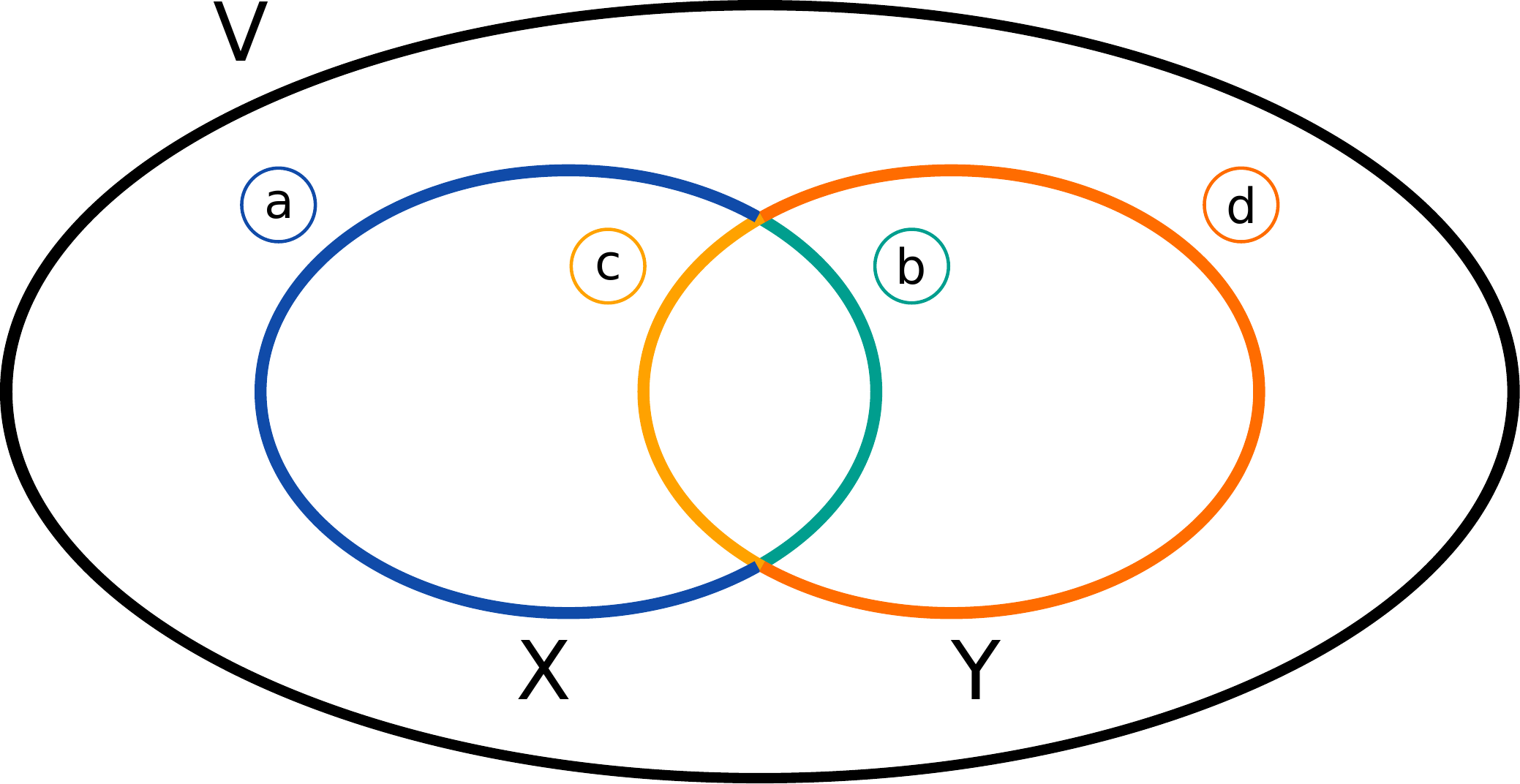}
  \caption{Crossing cuts\label{fig:cuts}}
\end{center}
\end{figure}

Theorem \ref{thm:cuts} shows that not every subset of cuts forms a feasible approximation set. 
Using Theorems 1 and 2 from \cite{buhmann2013} we can conclude that the expression 
$|A_\rho(G_1)||A_\rho(G_2)|/|S|$, where $S$ denotes the set of all cuts, is a lower bound
on the expected size of the intersection, but not its true value. 

We see that as soon as we have crossing cuts, we lose freedom in the number of cuts which we can freely choose.
We will show that this loss of freedom is substantial enough that by restricting ourselves to approximation sets without 
crossing cuts we get a approximation of the true expected value.


The number of ways in which we can cut a graph of $n$ vertices $m$ times so that the cuts do not cross 
is equal to the number of ways we can partition a set of $n$ integers into $m+1$ non-empty subsets. 
The latter describes the well known Stirling number of the second kind, denoted by $\stirlingtwo{n}{m+1}$, 
and defined by the explicit formula
\begin{equation}
\stirlingtwo{n}{k}={1\over k!}\sum_{j=0}^k(-1)^{k-j}{k \choose j}j^n.
\end{equation}
Observe that in a complete graph, a non-empty cut on $n$ vertices can be chosen in $2^{n-1}-1$ ways. Furthermore, the smallest approximation set that can contain crossing cuts is of size $4$. Such an approximation set would be approached by $\stirlingtwo{n}{5}$.
 We conclude that at least if the number of vertices $n$ is large compared to the number of cuts in 
the approximation set, the loss of freedom to choose $2$ additional cuts significantly outweighs the additional flexibility we gained by choosing the second cut in $2^{n-1}-2$ ways. Note that with increasing number of crossing cuts, the number of composed cuts grows even further.

We now calculate the expected similarity for non-crossing cuts and use it as an approximation for the unexpected 
similarity when cuts cross. 

Let $k:=|A_\rho(G_1)|$, $l:= |A_\rho(G_2)|$, and $\mathcal{F}_x$ denote all approximation sets that contain 
$x$ cuts. Then by Lemma 2 of \cite{buhmann2013} we have 
\begin{eqnarray*}
Es(k,l) &=& \frac{1}{|\mathcal{F}_k||\mathcal{F}_l|} \sum_{\substack{F_1\in \mathcal{F}_k\\F_2\in
    \mathcal{F}_l}} |F_1\cap F_2| = 
\frac{1}{|\mathcal{F}_k||\mathcal{F}_l|} \sum_{s\in S}
    |\{F\in \mathcal{F}_k | s\in F\}|\cdot|\{F\in \mathcal{F}_l | s\in F\}| \\
&=& \frac{1}{\stirlingtwo{n}{k+1}\stirlingtwo{n}{l+1}}\cdot \frac{\sum_{i=1}^{n-1}\left({n \choose i} \sum_{j=0}^{k-1}\left( \stirlingtwo{i}{j+1}\stirlingtwo{n-i}{k-j} \right) \cdot \sum_{j=0}^{l-1} \left( \stirlingtwo{i}{l+1}\stirlingtwo{n-i}{l-j} \right) \right)}{(2^{k+1}-2)(2^{l+1}-2)}.
\end{eqnarray*}

The factors $2^{k+1}-2$ and $2^{l+1}-2$ prevent from double-counting by choosing the same cuts in a different order.

Deriving a closed formula for $Es$ seems difficult, due to the Stirling numbers of the second kind. We can, however, 
evaluate the expression algorithmically for every necessary $k$ and $l$. In order to avoid straight-forward $O(n^7)$ computation, 
we pre-compute all binomial coefficients from ${n \choose 1}$ to ${n \choose n-1}$ in linear time using the identity
${n \choose i+1} = {n \choose i}\cdot \frac{(n-i)}{(i+1)}$. 
Similarly, using the combinatorial identity $\stirlingtwo{n}{k} = k\stirlingtwo{n-1}{k} + \stirlingtwo{n-1}{k-1}$,
we can pre-compute all Stirling numbers from $\stirlingtwo{0}{0}$ to $\stirlingtwo{n}{n}$ in time $O(n^2)$. 
Using the previously pre-computed values, we can compute all inner summands for different values of $l$ and $k$ 
in $O(n^3)$ time and space. An evaluation of the formula for two particular values of $k$ and $l$ thus needs only
$O(n)$ time and the evaluation for all possible pairs of $k$ and $l$ thus fits into the $O(n^3)$ time necessary for
the pre-computations.

\section{Experimental results \label{sec:experiments}}
In order to evaluate the performance of the approximation set optimization for this problem, we tested it on various sets of input 
instances and compared the performance to two other algorithms. The first being an algorithm where we average edge weights edge 
by edge and compute minimum cut on the resulting graph and the second being an algorithm where we increase $\rho$ until the intersection 
of $\rho$-approximation sets is non-empty for the first time and we choose the cut from this intersection. We look at this second 
algorithm because it intuitively seems to be a very good approach. 
\subsection{Tests}
Every test is as follows. Three complete, undirected, weighted graph instances are taken as input, where the first two are used to predict a good solution for a future one. Then, this solution is tested against the third instance. Figure \ref{fig:test} illustrates all tests done.
\begin{figure}[htbp]

\textbf{Input:} Three complete, undirected, weighted graphs, $G_1$, $G_2$, and $G_3$.\\
\textbf{Output:} Four different results:
\begin{itemize}
\item \textsc{Average:} Add the edge weights of $G_1$ and $G_2$ pairwise. Compute a minimum cut for the new formed graph. Apply the solution on $G_3$.

\item \textsc{FirstIntersection:} Find the smallest $\rho$ that results in a non-empty intersection of the $\rho$-approximation sets for $G_1$ and $G_2$. Pick a random cut from the intersection and apply it on $G_3$.

\item \textsc{BestSimilarity:} Find $\rho^*$ which maximizes the unexpected similarity of $G_1$ and $G_2$. Pick a random cut from the intersection and apply it on $G_3$. 

\item \textsc{Optimum:} Compute a minimum cut of $G_3$.
\end{itemize}
\caption{Specification of the Experiment}
\label{fig:test}
\end{figure}

\subsection{Data}
We run experiments on three different kind of graphs to evaluate the quality of the found solution: On graphs constructed with real world data
which we expect to be similar, on totally random graphs which we do not expect to be similar at all, and on artificially generated similar 
random graphs, which all have some small cuts in common.
The tests on real world data are based on the historical daily prices between 1999 and 2010 of thirteen different stock 
indices\footnote{BEL-20, Dow Jones, Hang Seng, Nikkei, AEX, CAC-40, Dax, Eurotop100, FTSE100, JSX, Nasdaq, AS30, RTSIndex, SMI} 
\cite{stock}. The vertices of our graph correspond to individual stock indices and the edges between them correspond to their 
similarity with respect to the problem of finding a contiguous sub-array of maximum sum\footnote{In other words, finding out when 
to buy and when to sell in order to maximize profit, if we are only allowed to do each operation once.}, as calculated by the 
approximation set optimization method
 \cite{buhmann2013}. Every graph corresponds to one year.
For the random graphs, we assign a random weight to every edge.
For the artificially made similar graphs we randomly define some cuts to be small and allocate small weights to their edges. To all the other 
edges we randomly assigned a weight from a larger range.

\subsection{Results}
\paragraph{Real World Data.}
To overcome the need of sampling for very high values of $\rho$, we took logarithms of the edge weights. In most of the 
tests $\rho^*$ was thus smaller than $3.0$. The results are listed in Figure \ref{fig:real_world_data_log}. 
In addition to results on all tests, we extracted pairs of instances with higher than median unexpected similarity 
and tried to use only those to predict results. As perhaps the only unexpected result, this did not seem to improve the 
specificity. It seems that the differences between various years vary too much (which corresponds to our ability to predict 
market behavior, which is, in general, poor). 

\begin{figure}[htbp]
	\centering
	\begin{tabular}{l|r|r|r|r}
		&sum of&\% of opt&sum of all&\% of opt\\
		&all tests&&tests with&\\
		&(858)&&$U \geq \tilde{U}$ (462)&\\
		\hline
		Average & 65062.20 & 188.70\% & 34826.08 & 187.28\% \\
		First Intersection & 63682.42 & 184.70\% & 34702.42 & 186.62\% \\
		Best Rho & 63116.60 & 183.06\% & 34702.42 & 186.62\% \\
		Optimum & 34478.63 & 100.00\% & 18595.24 & 100.00\%
	\end{tabular}
	\caption{Stock Market Data (Logarithmised)}
	\label{fig:real_world_data_log}
\end{figure}

\paragraph{Random Graphs.}
These experiments are mainly done for control purposes. If the data is truly random, 
we do not expect any algorithm to hold a significant edge, and indeed, the results
reflect this. Note that while no algorithm works well here, we are able to realize
that this will be so due to low unexpected similarity between instances. 
Figure \ref{fig:random_result_15} and Figure \ref{fig:random_result_50} list the results. 

\begin{figure}[htbp]
	\centering
	\begin{tabular}{l|r|r|r|r}
		&sum of&\% of opt&sum of all&\% of opt\\
		&all tests&&tests with&\\
		&&&$U \geq \tilde{U}$ (260)&\\
		\hline
		Average & 460763 & 132.74\% & 232340 & 132.91\%\\
		First Intersection & 459802 & 132.47\% & 232688 & 133.11\% \\
		Best Rho & 458033 & 131.96\% & 232546 & 133.03\% \\
		Optimum & 347112 & 100.00\% & 174810 & 100.00\%
	\end{tabular}
	\caption{Random Graphs of 15 Vertices, Edge Weight Range [0-255]}
	\label{fig:random_result_15}
\end{figure}

\begin{figure}[htbp]
	\centering
	\begin{tabular}{l|r|r|r|r}
		&sum of&\% of opt&sum of all&\% of opt\\
		&all tests&&tests with&\\
		&(512)&&$U \geq \tilde{U}$ (261)&\\
		\hline
		Average & 1616070 & 122.16\% & 820594 & 121.99\% \\
		First Intersection & 1612010 & 121.86\% & 821715 & 122.15\% \\
		Best Rho & 1602646 & 121.15\% & 820574 & 121.99\%\\
		Optimum & 1322892 & 100.00\% & 672683 & 100.00\%
	\end{tabular}
	\caption{Random Graphs of 50 Vertices, Edge Weight Range [0-255]}
	\label{fig:random_result_50}
\end{figure}

\paragraph{Similar Random Graphs.}
With these experiments we wanted to verify our expectation that results improve with increasing similarity of graphs, e.g. 
the larger the expected value of a random cut gets compared to the expected value of a small cut, the better are our results,
see Figures \ref{fig:similar_result1}, \ref{fig:similar_result2}, and \ref{fig:similar_result3}. For fixed small cut cost we 
get even better results, see Figures \ref{fig:similar_result4}, \ref{fig:similar_result5}, and \ref{fig:similar_result6}.

\begin{figure}[htbp]
	\centering
	\begin{tabular}{l|r|r|r|r}
		&sum of&\% of opt&sum of all&\% of opt\\
		&all tests&&tests with&\\
		&(512)&&$U \geq \tilde{U}$ (259)&\\
		\hline
		Average & 142105 & 114.43\% & 65818 & 108.95\%\\
		First Intersection & 139536 & 112.36\% & 64596 & 106.93\%\\
		Best Rho & 139331 & 112.20\% & 64596 & 106.93\%\\
		Optimum & 124182 & 100.00\% & 60410 & 100.00\%
	\end{tabular}
	\caption{Similar Graphs with Small Range [0,31] and Big Range [0,255]}
	\label{fig:similar_result1}
\end{figure}

\begin{figure}[htbp]
	\centering
	\begin{tabular}{l|r|r|r|r}
		&sum of&\% of opt&sum of all&\% of opt\\
		&all tests&&tests with&\\
		&(512)&&$U \geq \tilde{U}$ (286)&\\
		\hline
		Average & 132521 & 116.03\% & 72361 & 113.60\% \\
		First Intersection & 131285 & 114.95\% & 70983 & 111.44\% \\
		Best Rho & 128573 & 112.57\% & 70983 & 111.44\% \\
		Optimum & 114213 & 100.00\% & 63697 &  100.00\%
	\end{tabular}
	\caption{Similar Graphs with Small Range [0,31] and Big Range [0,127]}
	\label{fig:similar_result2}
\end{figure}

\begin{figure}[htbp]
	\centering
	\begin{tabular}{l|r|r|r|r}
		&sum of&\% of opt&sum of all&\% of opt\\
		&all tests&&tests with&\\
		&(512)&&$U \geq \tilde{U}$ (258)&\\
		\hline
		Average & 126123 & 119.87\% & 61841 & 116.91\% \\
		First Intersection & 126831 & 120.54\% & 61923 & 117.06\% \\
		Best Rho & 123126 & 117.02\% & 61581 & 116.42\% \\
		Optimum & 105220 & 100.00\% & 52897 & 100.00\%
	\end{tabular}
	\caption{Similar Graphs with Small Range [0,31] and Big Range [0,63]}
	\label{fig:similar_result3}
\end{figure}

\begin{figure}[htbp]
	\centering
	\begin{tabular}{l|r|r|r|r}
		&sum of&\% of opt&sum of all&\% of opt\\
		&all tests&&tests with&\\
		&(512)&&$U \geq \tilde{U}$ (401)&\\
		\hline
		Average & 79607 & 110.85\% & 57501 & 104.94\% \\
		First Intersection & 78311 & 109.04\% & 57603 & 105.13\% \\
		Best Rho &  77953 & 108.54\% & 57573 & 105.08\%\\
		Optimum &  71818 & 100.00\%	 & 54792 & 	100.00\%
	\end{tabular}
	\caption{Similar Graphs with Small Cut Weight 240 and Random Weight Range [0,255]}
	\label{fig:similar_result4}
\end{figure}

\begin{figure}[htbp]
	\centering
	\begin{tabular}{l|r|r|r|r}
		&sum of&\% of opt&sum of all&\% of opt\\
		&all tests&&tests with&\\
		&(512)&&$U \geq \tilde{U}$ (402)&\\
		\hline
		Average & 155440 & 111.97\% & 117571 & 106.03\%\\
		First Intersection & 155676 & 112.14\% & 117530 & 106.00\%\\
		Best Rho & 153880 & 110.84\% & 117530 & 106.00\% \\
		Optimum & 138828 & 100.00\% & 110881 & 100.00\%
	\end{tabular}
	\caption{Similar Graphs with Small Cut Weight 500 and Random Weight Range [0,255]}
	\label{fig:similar_result5}
\end{figure}

\begin{figure}[htbp]
	\centering
	\begin{tabular}{l|r|r|r|r}
		&sum of&\% of opt&sum of all&\% of opt\\
		&all tests&&tests with&\\
		&(512)&&$U \geq \tilde{U}$ (374)&\\
		\hline
		Average & 292254 & 107.90\% & 217775 &  104.36\%\\
		First Intersection & 291094 & 107.47\% & 217787 & 104.36\%\\
		Best Rho & 288942 & 106.68\% & 217685 & 104.32\%\\
		Optimum & 270853 & 100.00\% & 208679 & 100.00\%
	\end{tabular}
	\caption{Similar Graphs with Small Cut Weight 1000 and Random Weight Range [0,255]}
	\label{fig:similar_result6}
\end{figure}



\section{Conclusion \label{sec:conclusion}}

We showed how to apply approximation set optimization to the problem of looking for 
a minimum cut in a graph by adapting a known minimum cut algorithm and estimating 
the expected intersection of two sets of small cuts.

In general, the experimental results reaffirm our expectation that the algorithm
is better at generalizing than other simple heuristic algorithms. In addition to 
this, the unexpected similarity gives us additional information about the usefulness 
of our result. In some applications this can be a significant benefit. Having information 
about the quality of the calculated solution may be very important, in particular when 
the calculated solution is far from optimal.

By the choice of the optimal parameter $\rho$, our approach selects a set of minimum cuts
which are expected to have low weight in the following graph instances. This can be 
of significant help as it divides the solution space into sets of relevant and irrelevant
cuts, for instance, in a network robustness scenario, it separates the cuts that are likely 
to be critical from those that are not.

\bibliographystyle{unsrt}  
\bibliography{references}

\vfill\eject

\end{document}